%
\documentclass[runningheads]{llncs}
\usepackage{graphicx}

\usepackage{times}
\usepackage{epsfig}
\usepackage{amsmath}
\usepackage{amssymb}
\usepackage{enumitem}

\usepackage{bm}
\usepackage{upgreek}
\usepackage{algorithmic}
\usepackage[vlined,ruled]{algorithm2e}
\usepackage{soul,xcolor}
\usepackage[bottom]{footmisc}
\usepackage{mathtools}
\usepackage[breaklinks=true,colorlinks=true,citecolor=blue]{hyperref}

\spnewtheorem{rmk}{Remark}[section]{}{\itshape}
\DeclareMathOperator*{\argmin}{argmin}

\begin{document}
\title{
Controlling Meshes via Curvature: Spin Transformations for Pose-Invariant Shape Processing}
\titlerunning{Spin Transformations for Pose-Invariant Shape Processing}
%
\author{Lo\"ic Le Folgoc\inst{1} 
\and
Daniel C. Castro\inst{1} \and
Jeremy Tan\inst{1} \and
Bishesh Khanal\inst{2} \and \\
Konstantinos Kamnitsas\inst{1} \and
Ian Walker\inst{1} \and
Amir Alansary\inst{1} \and
Ben Glocker\inst{1}}
\authorrunning{L. Le Folgoc et al.}
%
\institute{BioMedIA, Imperial College London, United Kingdom \\
\and King's College London, United Kingdom\\
\email{l.le-folgoc@imperial.ac.uk}
}
\maketitle

\newcommand{\bigcomment}[2]{
  \begin{center}
  \fbox{\begin{minipage}{\linewidth}{\bf #1:} {\rm #2}\end{minipage}}
  \end{center}
}
\newcommand{\AN}[1]{\bigcomment{AN}{#1}}
\newcommand{\LLF}[1]{\bigcomment{LLF}{#1}}

\newcommand{\tool}{\textsc{CLDF}\xspace}
\newcommand{\todo}[1]{{\color{red} \textbf{TODO:}\xspace #1}}

\newcommand{\R}{\mathbb{R}}
\newcommand{\bbC}{\mathbb{C}}
\newcommand{\bbH}{\mathbb{H}}
\newcommand{\bbi}{\boldsymbol{\mathrm{i}}}
\newcommand{\bbj}{\boldsymbol{\mathrm{j}}}
\newcommand{\bbk}{\boldsymbol{\mathrm{k}}}
\newcommand{\calB}{\mathcal{B}}
\newcommand{\calC}{\mathcal{C}}
\newcommand{\calD}{\mathcal{D}}
\newcommand{\calE}{\mathcal{E}}
\newcommand{\calF}{\mathcal{F}}
\newcommand{\calG}{\mathcal{G}}
\newcommand{\calH}{\mathcal{H}}
\newcommand{\calI}{\mathcal{I}}
\newcommand{\calL}{\mathcal{L}}
\newcommand{\calM}{\mathcal{M}}
\newcommand{\calN}{\mathcal{N}}
\newcommand{\calO}{\mathcal{O}}
\newcommand{\calR}{\mathcal{R}}
\newcommand{\calS}{\mathcal{S}}
\newcommand{\calT}{\mathcal{T}}
\newcommand{\calV}{\mathcal{V}}
\newcommand{\calX}{\mathcal{X}}
\newcommand{\calW}{\mathcal{W}}
\newcommand{\rmA}{\mathrm{A}}
\newcommand{\rmC}{\mathrm{C}}
\newcommand{\rmD}{\mathrm{D}}
\newcommand{\rmE}{\mathrm{E}}
\newcommand{\rmF}{\mathrm{F}}
\newcommand{\rmH}{\mathrm{H}}
\newcommand{\rmI}{\mathrm{I}}
\newcommand{\rmJ}{\mathrm{J}}
\newcommand{\rmL}{\mathrm{L}}
\newcommand{\rmM}{\mathrm{M}}
\newcommand{\rmQ}{\mathrm{Q}}
\newcommand{\rmR}{\mathrm{R}}
\newcommand{\rmU}{\mathrm{U}}
\newcommand{\rme}{\mathrm{e}}
\newcommand{\rmf}{\mathrm{f}}
\newcommand{\rmh}{\mathrm{h}}
\newcommand{\rmn}{\mathrm{n}}
\newcommand{\rmp}{\mathrm{p}}
\newcommand{\rmq}{\mathrm{q}}
\newcommand{\rmu}{\mathrm{u}}
\newcommand{\rmv}{\mathrm{v}}
\newcommand{\rmx}{\mathrm{x}}
\newcommand{\rmy}{\mathrm{y}}
\newcommand{\bmc}{\bm{c}}
\newcommand{\bme}{\bm{e}}
\newcommand{\bmf}{\bm{f}}
\newcommand{\bmh}{\bm{h}}
\newcommand{\bms}{\bm{s}}
\newcommand{\bmq}{\bm{q}}
\newcommand{\bmt}{\bm{t}}
\newcommand{\bmr}{\bm{r}}
\newcommand{\bmu}{\bm{u}}
\newcommand{\bmw}{\bm{w}}
\newcommand{\bmy}{\bm{y}}
\newcommand{\bmz}{\bm{z}}
\newcommand{\brmf}{\boldsymbol{\rmf}}
\newcommand{\brmh}{\boldsymbol{\rmh}}
\newcommand{\brmp}{\boldsymbol{\rmp}}
\newcommand{\brmx}{\boldsymbol{\rmx}}
\newcommand{\brmy}{\boldsymbol{\rmy}}
\newcommand{\brmA}{\boldsymbol{\rmA}}
\newcommand{\brmC}{\boldsymbol{\rmC}}
\newcommand{\brmD}{\boldsymbol{\rmD}}
\newcommand{\brmE}{\boldsymbol{\rmE}}
\newcommand{\brmH}{\boldsymbol{\rmH}}
\newcommand{\brmI}{\boldsymbol{\rmI}}
\newcommand{\brmL}{\boldsymbol{\rmL}}
\newcommand{\brmQ}{\boldsymbol{\rmQ}}
\newcommand{\brmR}{\boldsymbol{\rmR}}
\newcommand{\dis}{\displaystyle}
\newcommand{\T}{{\mkern-1.5mu\mathsf{T}}}
\newcommand{\Id}{\text{Id}}
\newcommand{\tr}{\text{tr}}
\newcommand{\etal}{{\it et al }}
\newcommand{\eg}{\textit{e.g.}\xspace}
\newcommand{\ie}{\textit{i.e.}\xspace}
\newcommand{\uline}[1]{\underline{#1}}
\newcommand{\duline}[1]{\underline{\underline{#1}}}
\newcommand{\eq}{\!=\!}
\newcommand{\mrho}{{\!\rho}}


\def\tool{{\textsc{Eugene}}}

\begin{abstract}
We investigate discrete spin transformations, 
a geometric framework to manipulate surface meshes 
by controlling mean curvature.  Applications include surface fairing -- flowing a mesh onto say, a reference sphere -- and mesh extrusion -- \eg, rebuilding a complex shape from a reference sphere and curvature specification. 
Because they operate in curvature space, these operations can be conducted very stably across large deformations with no need for remeshing. Spin transformations add to the algorithmic toolbox for pose-invariant shape analysis. Mathematically speaking, mean curvature is a shape invariant and in general fully characterizes closed shapes (together with the metric). Computationally speaking, spin transformations make that relationship explicit. 
Our work expands on a \textit{discrete} formulation of spin transformations. Like their smooth counterpart, discrete spin transformations are naturally close to conformal (angle-preserving). This quasi-conformality can nevertheless be relaxed to satisfy the desired trade-off between area distortion and angle preservation. We derive such constraints and propose a formulation in which they can be efficiently incorporated. The approach is showcased on subcortical structures. 
\end{abstract}
\section{Introduction}

Generative shape models are tremendously useful in computational anatomy (shape representation, population analysis), medical imaging and computer vision (segmentation, tracking), computer graphics and beyond. Most approaches to statistical shape analysis fundamentally rely on registration,
from landmark based representations and active shape models~\cite{cootes1995active,belongie2002shape,myronenko2010point},
to medial representations~\cite{joshi2002multiscale} and Principal Geodesic Analysis~\cite{fletcher2004principal},
to deformable registration and diffeomorphometry~\cite{durrleman2014morphometry,zhang2015bayesian}.
Registration is known to be a source of bias in shape analysis, but is often a necessary `evil' because input data does not come pre-aligned in a common reference frame (or \textit{pose}). In contrast, the shape information of interest is often invariant to the object pose. Our main motivation is to investigate geometric tools that can open the way to learned, \textit{pose-invariant} generative shape models (specifically, curves and $3$D surfaces). The key insight is that mean curvature is pose-invariant and generally characterizes the shape \textit{losslessly}. This work investigates spin transformations as the algorithmic tool to computationally implement this insight. The cornerstone of the framework lies in a gracefully simple equation that relates a spin transformation $\phi\colon \calF\rightarrow\bbH$ (one quaternion per face in the mesh) 
to a change $\mu\colon\calF\rightarrow\R$ in the mean curvature via a first-order differential operator $D_e$: 
\begin{equation}
D_e \phi = \mu \phi \,.
\end{equation}

Typically, a desired change of curvature is specified via $\mu$, yielding a transformation $\phi$ from which a new shape can be constructed. 
The present work demonstrates this concept and shows its applicability to manipulate (flow and extrude) closed shapes 
in a stable manner across large deformations. 
Section~\ref{sec: geometric setting} reviews the discrete geometric setting, \ie (i)~the geometric objects to which the framework applies, (ii)~discrete mean curvature, (iii)~background on quaternions as similarity transformations in $\R^3$. Section~\ref{sec: spin transformations} introduces discrete spin transformations. Within the framework of spin transformations, the task of flowing a mesh onto a reference shape and that of extruding a shape back from the reference are highly symmetric: both rely on the ability to compute a transformation based on prescribed curvature and area changes. Section~\ref{sec: algorithms} gives an overview of the proposed procedure. Section~\ref{sec: applications} discusses applications and results.

\begin{figure}[t]
\includegraphics[width=\textwidth]{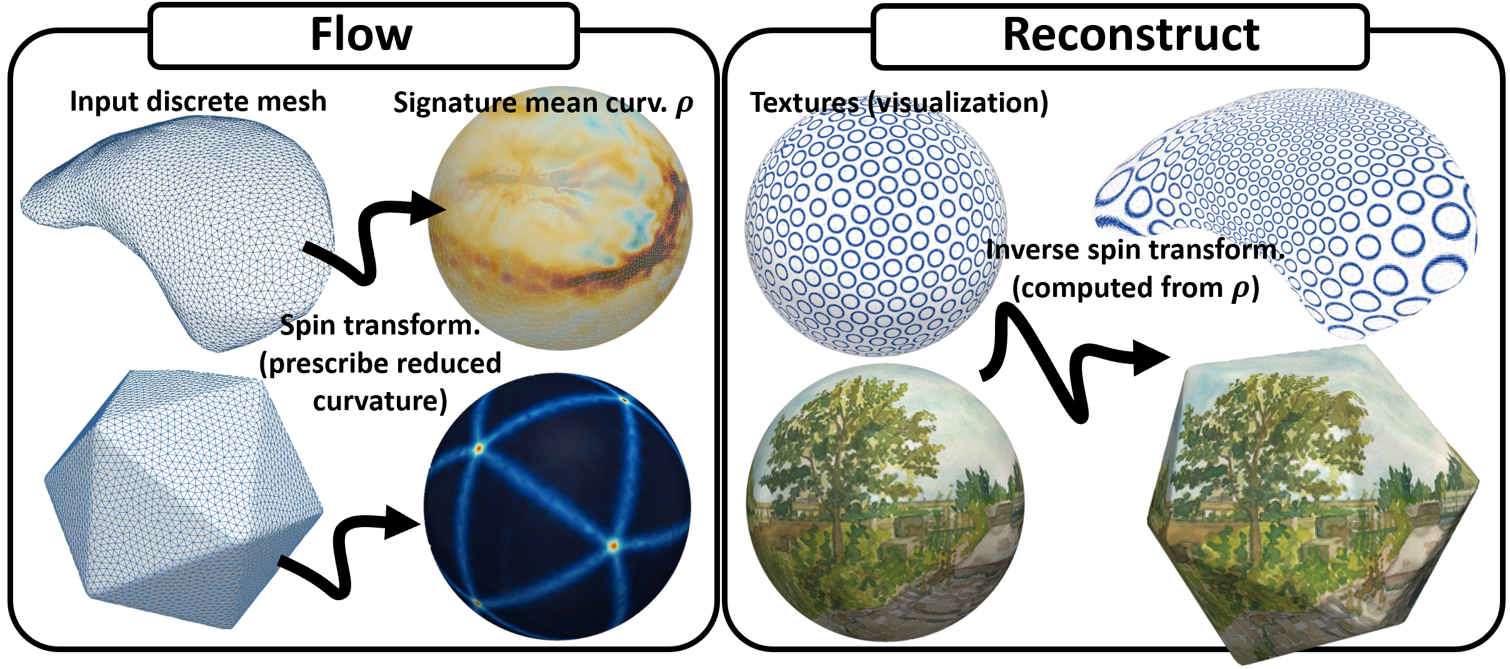}
\caption{Discrete spin transformations allow for controlling meshes via the mean curvature invariant. (a) Input face edge-constraint nets are flowed to a reference shape in the homotopy class (\eg the unit sphere $\calS^2$). Information required to recompute the original shape up to pose and scale is summarized within a scalar field $\rho$. (b) The inverse spin transformation is retrieved. Texture coordinates mapped onto the reference sphere are pushed forward with the extruded mesh. Note the preservation of texture, from which deformations are seen to be quasi-conformal. Top row: putamen. Bottom: icosahedron.} \label{fig: leading figure}
\end{figure}

\section{Related work}

\subsubsection{Pose-invariant shape analysis.} Spectral shape descriptors~\cite{reuter2006laplace,raviv2010volumetric},
built from the spectrum and eigenfunctions of the Laplace(--Beltrami) operator, have achieved popularity in this context,
spanning a variety of applications \eg, object retrieval~\cite{bronstein2011shape}, shape dissimilarity quantification~\cite{konukoglu2013wesd}, analysis of anatomical structures~\cite{niethammer2007global,germanaud2012larger,wachinger2015brainprint}, transfer of structural and functional data~\cite{ovsjanikov2012functional,lombaert2015brain}.
Spectral representations pose two challenges: firstly, going back from the spectral descriptor to the corresponding shape is difficult; and secondly, they tend to discard fine-grained, local information in favor of global shape properties and symmetries. To supplement the intrinsic Laplace--Beltrami operator,  the \textit{extrinsic} Dirac operator~\cite{liu2017dirac}, which carries more information about the shape immersion, has recently been investigated for shape analysis. Geometric deep learning~\cite{bronstein2017geometric} provides the toolset to analyze \textit{functions} over a \textit{fixed} graph. It remains unclear how to analyze \textit{graphs} themselves. The present work contributes with a lossless and invertible mechanism for turning a mesh into a function (the curvature) over a reference template (say, a sphere).

\subsubsection{Shape flows, large deformations, conformal maps.} Mean curvature flow is the archetypal algorithm for fairing, in part due to its simplicity and intuitive appeal. However mesh quality tends to degrade quickly throughout the flow, requiring tedious monitoring and remeshing to reduce artefacts and prevent singularities~\cite{kazhdan2012can}. Furthermore it is not suitable for mesh extrusion. Conformal maps are often perceived as the gold standard in such contexts, and spin transformations originate from this perspective~\cite{crane2011spin,crane2013robust}. Several discretized and discrete quasi-conformal frameworks have been proposed (\eg,~\cite{luo2004combinatorial,lam2018infinitesimal}) on top of an incredibly rich body of theoretical work. Conformal maps have found a natural application in the context of brain mapping~\cite{hurdal2009discrete,gu2004genus} by mapping the cortical surface to a reference domain. Rather than strictly on conformality, our focus here is on a parametrization of large deformations that (1) works from the shape invariant mean curvature (2) allows to efficiently flow between any shape and a reference. It is more generally related in spirit to large diffeomorphic frameworks~\cite{vaillant2005surface,lorenzi2011schild} that can flow a shape from a template and (pose-equivariant) vector field. Our work expands on the framework of \textit{discrete} spin transformations as introduced by Ye et al.~\cite{ye2018unified}. One of the appeals of a discrete framework is to bypass \textit{discretization} errors by design and to offer a consistent definition of discrete geometric concepts such as curvature. We introduce the framework to the community and contribute~(i) with an optimization strategy that gives finer-grained control over deformations;~(ii) by deriving constraints within this formulation for integrability on general topologies, and area preservation;~(iii) by exploring its potency for mesh extrusion.

\section{Discrete Geometric Setting}
\label{sec: geometric setting}

\paragraph{Face edge-constraint nets.} Our work focuses on the case of closed compact orientable surfaces in $\R^3$ and follows the discrete geometric setting introduced by Ye et al.~\cite{ye2018unified}. Surfaces are discretized as face edge-constraint nets, generic constructs that encompass but are not restricted to standard triangulated meshes. Let $\calG=(\calV,\calF,\calE)$ denote the net combinatorics, resp. its vertices, faces and edges. Adjacent faces meet along a single edge. Edges are shared by exactly two adjacent faces. Faces can be arbitrary polygons (such as with simplex meshes~\cite{delingette1999general}). In addition, let each face be assigned a unit normal $n$, such that for any two adjacent faces $i$ and $j$ joined along edge $e_{ij}$ (${(i,j) \in \calE}$), the normals satisfy the looser condition $n_i+n_j \perp e_{ij}$. $\calX=(\calG,n)$ is called a face edge-constraint net. For instance standard triangulations with normals orthogonal to faces are face edge-constraint nets.

\begin{figure}[t]
\includegraphics[width=\textwidth]{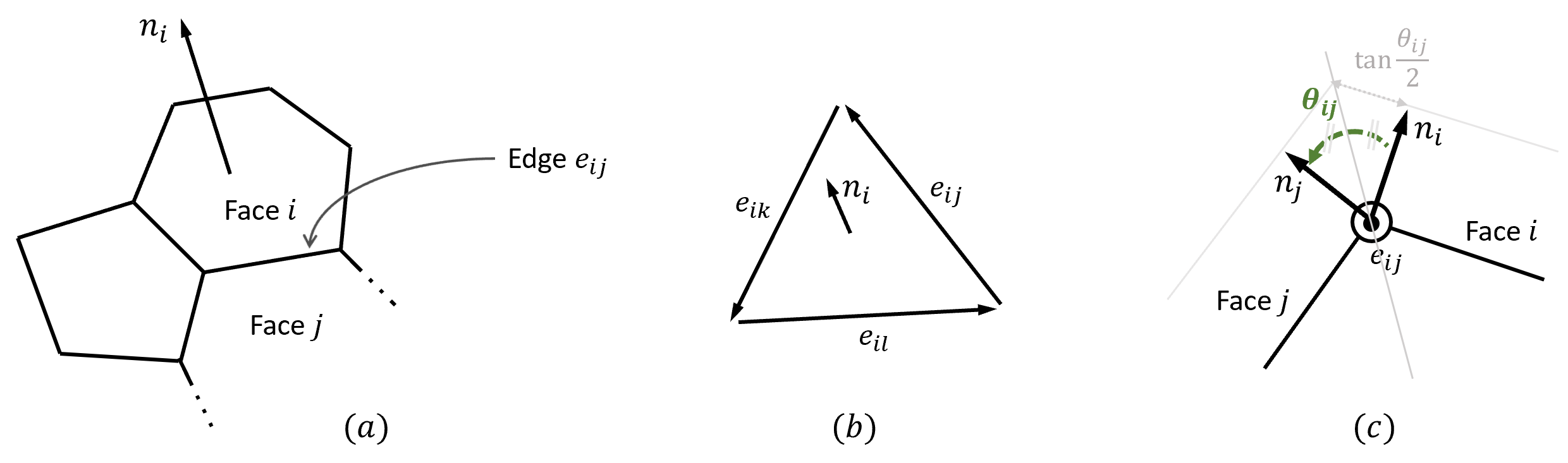}
\caption{Face edge-constraint nets: (a) faces are general polygons; (b) face edges are oriented (counter-clockwise); (c) $\theta_{ij}$ is the bending angle, positive if the edge is convex; the edge integrated mean curvature $H_{ij}\eq \vert e_{ij}\vert \tan(\theta_{ij}/2)$ is the signed created area for face $i$ above $e_{ij}$ when faces $i$ and $j$ are offset by a unit length $1$ in the direction of their normals.} \label{fig: face_edge-constraint_net}
\end{figure}

\paragraph{Discrete mean curvature.} Let $\calX$ be a net. $\calX$ is orientable and, without loss of generality, directed edges $e_{ij}$ are traversed in the direction towards which they point when cycling over vertices of face $i$. 
This lets us orient the dihedral angle $\theta_{ij}$ between planes $P_i\!\triangleq\!\mathrm{span}\{n_i,e_{ij}\}$ and $P_j\!\triangleq\!\mathrm{span}\{n_j,e_{ij}\}$. For standard triangulations, $\theta_{ij}$ is just the bending angle between faces. The \textit{integrated mean curvature} on edge $e_{ij}$ (Fig.~\ref{fig: face_edge-constraint_net}(c)) is defined as $H_{ij}\!\triangleq\! \vert e_{ij}\vert \tan(\theta_{ij}/2)$. The integrated mean curvature on \textit{face} $i$ is the sum of its integrated edge curvatures: $H_i\!\triangleq\! \sum_{j\in\calN(i)} H_{ij}$. The discrete mean curvature $h_i\!\triangleq\! H_i/A_i$ follows by turning $H_i$ into a density over the face. 
With this, the discrete mean curvature satisfies a discrete counterpart to Steiner's formula. Steiner's formula is a characterization of mean curvature that relates it to the relative change of area when offsetting the surface in the normal direction $n$ by a distance $t$ (replace $A_i$ by an infinitesimal area element $dA$ in Eq.~\eqref{eq: steiner} for the original formula):
\begin{equation}
A_i^{(t)} = A_i(1+h_i t + \calO(t^2)) \,.
\label{eq: steiner}
\end{equation}

\paragraph{Geometry in the quaternions.} Quaternions $\bbH$ provide a natural algebraic language for geometry in $\R^3$, much like complex numbers 
for planar geometry. Let $\{1,\bbi,\bbj,\bbk\}$ denote a basis for $\bbH$. Elements $v\!\triangleq\!(v_x,v_y,v_z)\!\in\!\R^3$ are identified with pure imaginary quaternions $v_x \bbi + v_y \bbj + v_z \bbk\in \mathrm{Im}~\bbH\triangleq \mathrm{span}\{\bbi,\bbj,\bbk\}$, so that surfaces are naturally immersed in $\mathrm{Im}~\bbH$. Denote by $\bar{q}\triangleq a-(b\bbi\!+\!c\bbj\!+\!d\bbk)$ the quaternionic conjugate of $q\!\triangleq\! a\!+\!b\bbi\!+\!c\bbj\!+\!d\bbk\in\bbH$. The norm $\vert q\vert$ of $q$ is defined as the square root of $\bar{q}q=a^2\!+\!b^2\!+\!c^2\!+\!d^2$. All $q\neq 0$ admit an inverse $q^{-1}=\bar{q}/\vert q\vert^2$. Again like complex numbers, quaternions admit a polar decomposition $q\!=\!se^{\theta u}\!=\!s(\cos(\theta)+\sin(\theta)u)$ with $u\!\in\! \mathrm{Im}~\bbH$ a unit vector, which makes their geometric meaning more explicit. Indeed, $v\mapsto qvq^{-1}$, also known as conjugation by $q$, expresses rotation around $u$ by an angle $2\theta$. In the same vein, the expression $\tilde{v}=qv\bar{q}$ conveniently expresses a similarity transformation: $\tilde{v}$ corresponds to $v$ rotated around $u$ by $2\theta$ and rescaled by $s^2$. 

\paragraph{Hyperedges.} Every edge in the net $\calX$ is associated with a quaternion $E_{ij}$ dubbed \textit{hyperedge}, with real part equal to the integrated mean curvature $H_{ij}$ at the edge, and imaginary part equal to the embedding $e_{ij}\in\mathrm{Im}~\bbH$ of the edge:
\begin{equation}
E_{ij}\triangleq H_{ij} + e_{ij}\in \bbH \,.
\label{eq: hyperedges}
\end{equation}
Hyperedges are the fundamental structure on which discrete spin transformations act. 
They summarize all the geometric information that, along with the mesh combinatorics, allows to reconstruct the discrete surface immersion (Appendix~\ref{sec: hyperedges}). With that, spin transformations are introduced in a straightforward manner.

\section{Discrete Spin Transformations}
\label{sec: spin transformations}

\paragraph{Discrete spin transformations.} A discrete spin transformation $\phi$ associates a single quaternion $\phi_i$ with each face $i$ of a face edge-constraint net. The transformation acts on hyperedges $E_{ij}$ and face normals $n_i$ as follows:
\begin{equation}
\label{eq: discrete spin transformation}
\begin{aligned}
  E_{ij} &\mapsto \tilde{E}_{ij}=\bar{\phi}_i E_{ij} \phi_j\, ,\\
  n_i &\mapsto \tilde{n}_i=\phi_i^{-1} n_i \phi_i \, . 
\end{aligned}
\end{equation}
The elegance of the construct lies in the fact that Eq.~\eqref{eq: discrete spin transformation} does transform a face edge-constraint net into another edge-constraint net. This is easily checked (cf.~\cite{ye2018unified}), with the main elements of the proof stemming from the geometric interpretation of hyperedges (Appendix~\ref{sec: hyperedges}) and from the constraint on face normals. Furthermore, discrete spin transformations $E\rightarrow_{\phi} \tilde{E}$ are trivially invertible: $\tilde{E}\rightarrow_{\phi^{-1}}\!E$. The \textit{integrability} condition that each face in the new net closes ($\sum_j\!\tilde{E}_{ij}\in\!\R$) is equivalent to the existence of a real valued function $\rho: i\mapsto \rho_i\in\R$ over faces such that:
\begin{equation}
\label{eq: closedness}
D_{\calX} \phi = \rho A \phi \,.
\end{equation}

Equation~\eqref{eq: closedness} is the cornerstone of the framework. $D_{\calX}$ is henceforth referred to as the \textit{intrinsic} Dirac operator. $D_{\calX}$ sends a quaternionic function over faces to another one such that $(D_{\calX}\phi)_i\triangleq \sum_j E_{ij}\phi_j$. Left multiplying both sides by $\bar{\phi}_i$, the closedness constraint on faces is immediately apparent: $\bar{\phi}_i (D_\calX\phi)_i = \sum_j \tilde{E}_{ij}$ must be real-valued. For ease of exposition, the expression in the introduction is formulated using a slightly different yet immediately related operator, the \textit{extrinsic} Dirac operator $(D_e\phi)_i\!\triangleq\! \sum_j E_{ij}(\phi_j-\phi_i)=(D_\calX \phi)_i - H_i \phi_i$. It also discards the normalization by $A$ as in~\cite{ye2018unified}. 
The proposed normalization however mirrors more faithfully the smooth counterpart of the present setting (see \eg~\cite{kamberov1996bonnet}).

The intrinsic Dirac operator $D_{\calX}$ 
creates an explicit relationship between a spin transformation $\phi$ and the discrete (resp. integrated) mean curvature $\tilde{h}$ (resp. $\tilde{H}$) of the new net, namely $\bar{\phi}_i (D_\calX\phi)_i = \tilde{H}_i\triangleq \tilde{h}_i \tilde{A}_i$ as long as the new net closes. Coupling with Eq.~\eqref{eq: closedness}, 
\begin{equation}
\label{eq: rho interpretation}
\tilde{h}_i \tilde{A}_i=\rho_i A_i \vert\phi_i\vert^2 \, .
\end{equation}
When $\phi\!\coloneqq\! 1$ is the identity transform, $\rho_i\eq h_i\eq\tilde{h}_i$. For smooth $\vert\phi_i\vert$ and from Eq.~\eqref{eq: discrete spin transformation}, $\rho_i\sqrt{A_i}\!\approx\!\tilde{h}_i \sqrt{\tilde{A}_i}$. In other words, $\rho_i$ jointly describes the mean curvature and length element. This quantity is precisely known as the mean curvature half-density $h\vert df\vert$ in the smooth setting, and is generally in one-to-one correspondence with a given shape. Finally, with the \textit{extrinsic} Dirac operator, the corresponding $\mu$ describes a \textit{change} in half-density instead: $\tilde{h}_i \tilde{A}_i= (h_i+\mu_i) A_i \vert\phi_i\vert^2$.

\begin{figure}[t]
\includegraphics[width=\textwidth]{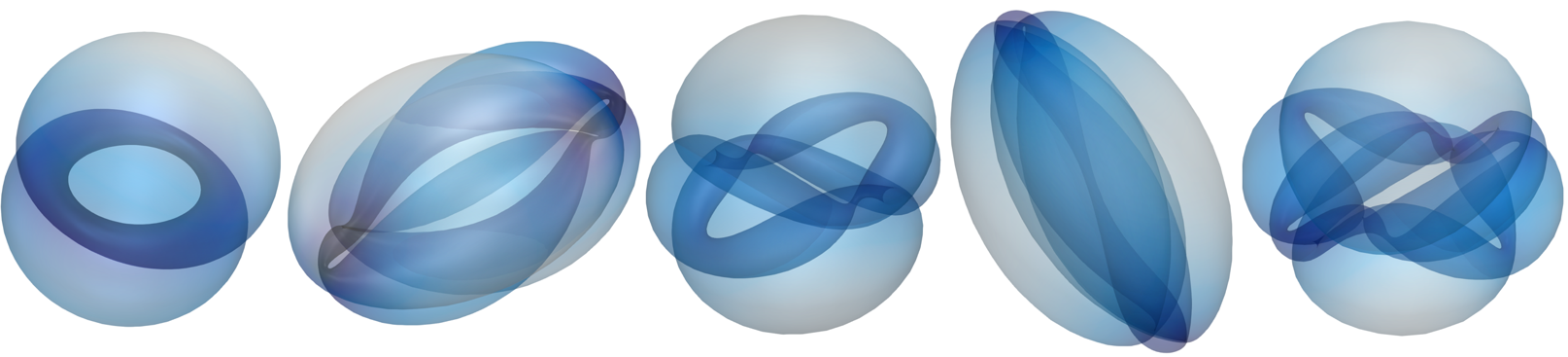}
\caption{A few leading eigenvectors of the intrinsic Dirac operator for the unit sphere, visualized as surface immersions (color map: eigenvector magnitude).} \label{fig: Dirac immersions}
\end{figure}

\paragraph{Dirac operators.} Dirac operators $D_\calX$ and $D_e$ have a number of properties that make them appealing for various tasks in shape analysis. $D_\calX$ and $D_e$ are self-adjoint operators. Dirac operators relate to square roots of the Laplace--Beltrami operator $L$. Whereas $L$ captures the intrinsic manifold geometry and is invariant by isometry, the Dirac operators can disambiguate much more about the surface \textit{immersion} into $\R^3$. We refer the reader to~\cite{liu2017dirac,ye2018unified} for a discussion from this perspective. The eigenvectors of Dirac operators all satisfy Eq.~\eqref{eq: closedness} and thus provide new immersions of the abstract manifold into $\R^3$ (new transformed $\tilde{\calX}$). The first (null) eigenvector of $D_e$ is trivial. $D_\calX$ cannot have a null eigenvalue for closed surfaces (\eg spherical topology) of practical interest in the present work, since that would result in a minimal closed surface with everywhere zero mean curvature. The smallest eigenvector of $D_\calX$ provides a generally non-trivial immersion with higher smoothness than the original shape (lower Willmore energy $\int \vert h\vert^2 dA$). Ye et al.~\cite{ye2018unified} explore this mechanism for the purpose of surface fairing. The next leading eigenvectors give some geometric insight into $D_\calX$ (Fig.~\ref{fig: Dirac immersions}). In this work however, we investigate a strategy closely related to~\cite{crane2013robust} with a fine-grained control over the surface deformations.

\section{Algorithms}
\label{sec: algorithms}

\begin{rmk}Quaternions $q$ admit representations $M[q]$ as $4\!\times\! 4$ real matrices (Eq.~\eqref{eq: matrix representation}), so that standard linear algebra libraries can be used to solve quaternionic linear systems. In particular, $M[\bar{q}]=M[q]^\T$, thus Hermitian (quaternionic) forms are represented by real symmetric matrices. We denote real vectors and matrix representations below with upright bold symbols.
\begin{equation}
\label{eq: matrix representation}
\begingroup
\setlength\arraycolsep{3pt}
M[q]\triangleq\begin{bmatrix*}[r]
   a & -b & -c & -d \\
   b &  a & -d &  c \\
   c &  d &  a & -b \\
   d & -c &  b &  a
   \end{bmatrix*}
\endgroup \,.
\end{equation}
\end{rmk}

\subsubsection{Overview.} The scalar function $\rho$ introduced in section~\ref{sec: spin transformations} provides the primary degrees of freedom for mesh manipulation, and it tightly relates to mean curvature. 
Of course only a subset of functions $\rho$ can be associated with some $\phi$ such that the integrability condition Eq.~\eqref{eq: closedness} is satisfied. Namely, $D_\rho\!\triangleq\!D\!-\!\rho$ should have a null eigenvalue. This leads Crane et al.~\cite{crane2011spin} to solve for the smallest eigenvalue $\gamma$ and eigenvector $\phi$, yielding a solution of Eq.~\eqref{eq: closedness} up to a small constant shift: $D\phi\eq(\rho\!+\!\gamma)\phi$. We propose instead to formulate the objective $D_{\!\rho}\phi\!\simeq\!0$ as a minimization problem. This gives fine-grained control to add specifications (\eg smoothness, area distortion), 
many of which can be efficiently expressed as linear(ized) constraints or quadratic regularizers, within a unified formulation. 
Thus finding $\phi$ amounts to solving a quadratic problem:
\begin{equation}
\label{eq: QP}
\argmin_\upphi \, \underbrace{\upphi^\T(\brmD_\mrho \brmA^{\!-1} \brmD_\mrho)\upphi}_{D_{\!\rho}\phi\simeq 0} + \underbrace{(\upphi\!-\!1)^\T\alpha \brmR(\upphi\!-\!1)}_{\text{regularization}}\, ,
\end{equation}
under a set of linear constraints on $\upphi$. $\brmA$ is a diagonal matrix of face areas. In practice we set $\brmR$ to $\brmA\!+\!\beta \brmL_f$, where $\brmL_f$ is an integrated Laplacian over faces. The eigensystem actually solved in~\cite{crane2011spin} closely relates to the simplest case where there are no constraints and $\beta\!\coloneqq\!0$.

Overall, the procedure is as follows: prescribe a scalar function $\rho$ for a target shape or curvature change (sec.~\ref{sec: applications}); then solve for the spin transformation $\phi$ (Eq.~\eqref{eq: QP}); finally compute new hyperedges (Eq.~\eqref{eq: discrete spin transformation}) and solve a linear system for the new vertex coordinates (Eq.~\eqref{eq: edge integration}). The steps are typically iterated over, resulting in a flow.

\subsubsection{Computing the new immersion.} Let transformed edges $\tilde{e}_{ij}\eq \mathrm{Im}\,\tilde{E}_{ij}$ be indexed by their start and end vertices $v\rightarrow v'$. Vertex coordinates $\tilde{f}\colon v\!\in\! V\mapsto \tilde{f}_v$ satisfy $\tilde{f}_{v'}-\tilde{f}_{v}=\tilde{e}_{v\rightarrow v'}$. In practice, we solve the mathematically equivalent (Appendix~\ref{sec: edge integration}) linear system 
\begin{equation}
\label{eq: edge integration}
\Updelta \tilde{f}\eq \nabla\cdot\tilde{e}\, ,
\end{equation} where $\Updelta$ and $\nabla\cdot$ are the standard discrete (cotangent) mesh Laplacian and divergence operators~\cite{meyer2003discrete}. This method of integration is robust to numerical errors. The Laplacian and divergence are computed w.r.t. either the source ($e$) or target ($\tilde{e}$) mesh metric (with empirically identical results). A benefit of working from a discrete setting is that no discretization error is introduced from $\phi$ to the corresponding $\tilde{f}$. 

\subsubsection{Geometrically constrained flows.} The proposed formulation (Eq.~\eqref{eq: QP}) enables fine-grained control over the flow by prescribing additional constraints. For instance, the method extends to topologies beyond spherical by adding an \textit{exactness} constraint (Appendix~\ref{sec: integrability constraints}). The mapping can also be encouraged to preserve angles (\ie conformality) and minimize area distortion. Conformality is key in preserving mesh quality across exceptionally large deformations, which prevents considerable loss of numerical stability. It is intuitively described as circles being locally transformed into circles, or indeed texture-preserving (Fig.~\ref{fig: leading figure}). Quasi-conformality is inherent to the present framework. From Eq.~\eqref{eq: discrete spin transformation}, the relative length of edges is preserved as soon as $\vert \phi_i\vert$ varies smoothly across faces. On the other hand large area distortion can be introduced, particularly in regions of high curvature. In some applications, we may prefer to trade off some distortion of angles for a better preservation of areas. We note again from Eq.~\eqref{eq: discrete spin transformation} that the magnitude $\vert\phi_i\vert^4$ of the spin transformation relates to the local change of area. Thus scale changes $\log{\tilde{A}_i/A_i}$ (up to global rescaling) can be penalized via a linearized soft constraint over $\phi$ (Appendix~\ref{sec: area distortion penalty}). 

\subsubsection{Filtering in curvature space.} As described in~\cite{crane2013robust} in a related setting, the flow of the spin transformation can also be altered by directly manipulating $\rho$. The rate of change for geometric features of various scales can be tweaked by manipulating its frequency spectrum. Moreover some constraints can be efficiently enforced by orthogonal projection of $\rho$ onto a linear subspace. In particular, Appendix~\ref{sec: integrability constraints} derives alternative integrability conditions in the form of simple linear constraints on $\rho$, for the proposed discrete geometric framework.

\section{Applications}
\label{sec: applications}

This section showcases the approach on a collection of structured meshes of subcortical structures from the UK Biobank database \cite{miller2016multimodal}. The typical mesh size is of a few thousand nodes (up to $20$k). The framework was implemented in numpy. The tool mostly relies on efficient (sparse) linear algebra. Experiments were run on a standard laptop (i7-8550U CPU @ 1.80GHz).

\begin{figure}[t]
\includegraphics[width=\textwidth]{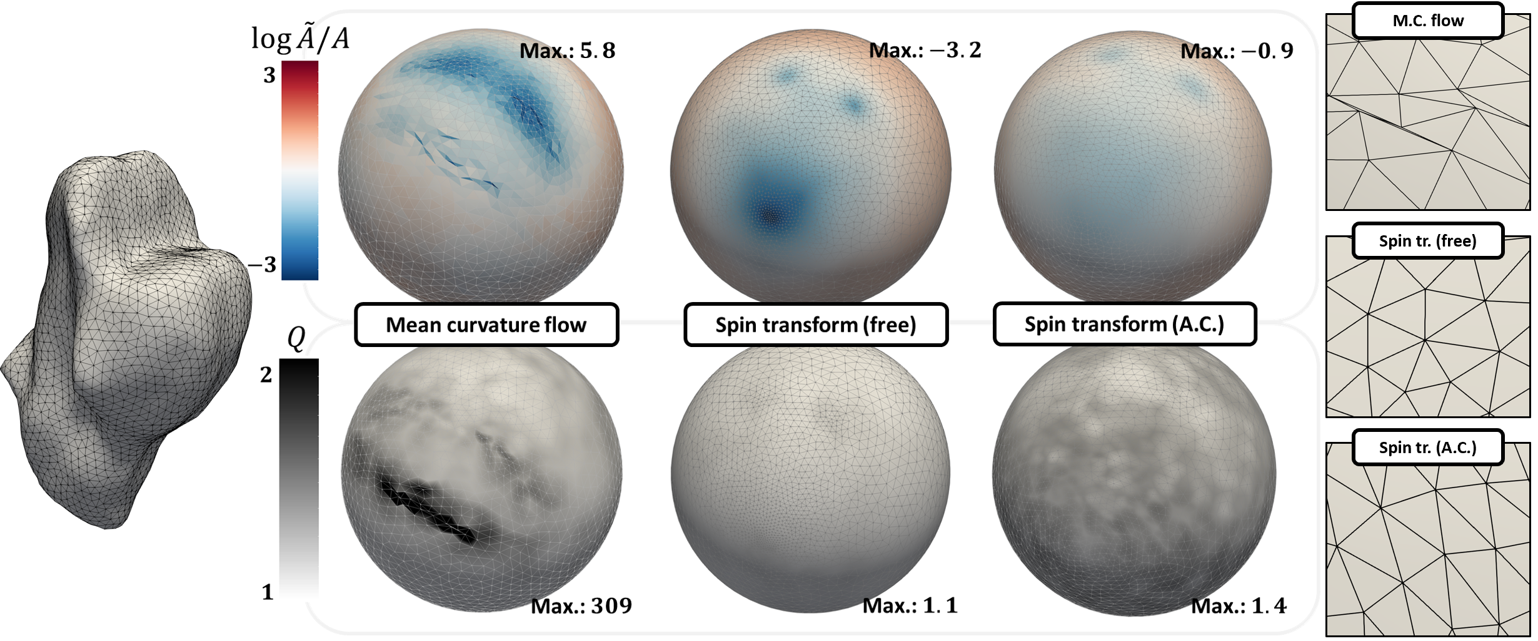}
\caption{Example surface flow of a subcortical structure (brain stem) to the reference sphere. Comparison of discrete spin transformations with an incompressible mean curvature (MC) flow. (Left) The brain stem. (Middle) Area distortion (top row, $0$ distortion is best) and conformality error (bottom row, $Q\eq1$ is best) displayed over the reference geometry. (Right) Zoom on the flowed triangulated mesh (A.C. $\equiv$ area constraint flow; free $\equiv$ unconstrained). Unlike MC flows, spin transformations naturally preserve the triangulation quality and are numerically stable. The area constrained variant yields a reasonable trade-off between preserving angles and areas without introducing unexpected artefacts.} \label{fig: mean curvature vs. spin}
\end{figure}

\subsubsection{Surface Fairing.} Surface fairing is the process of producing successively smoother approximations of a mesh geometry $f$. Most algorithms proceed by minimizing a fairing energy, such as the membrane energy $E_M(f)\!\triangleq\! \int_S \vert \nabla f \vert^2 dA$ or the Willmore functional $E_W(f)\!\triangleq\! \int_S h^2 dA$. Recalling that $\Delta f = h \vec{n}$ and ignoring the dependence of $\Delta$ on $f$, gradient descent on $E_M$ (resp.~$E_W$) yields $\dot{f}\!\propto\! \Delta f$ ($\dot{f}\!\propto\! \Delta^2\! f$). The former yields the widespread \textit{mean curvature flow} $\dot{f}\!\propto\! -h \vec{n}$ that iteratively evolves points along the surface normal $\vec{n}$ with a magnitude proportional to the mean curvature $h$. Crane et al.~\cite{crane2013robust} first suggested in the context of spin transformations to optimize $E_W$ directly w.r.t. $h$, yielding the simple flow $\dot{h}\!\coloneqq\! h$ \textit{in curvature space}. A benefit of the approach is to decouple time and spatial integration, yielding numerically stable solutions across large time steps. We follow the same strategy. The prescribed change of curvature $\delta h\!\coloneqq\! -\tau h$ is then (optionally filtered and) integrated into a new surface immersion $\tilde{f}$, by computing the corresponding spin transformation as per section~\ref{sec: algorithms}. Specifically, for a given target curvature $\bar{h}_i$ (say $h_i\!+\!\delta h_i$) and area $\bar{A}_i$, we let $\rho_i\!\coloneqq\! \bar{h}_i\sqrt{\bar{A}_i/A_i}$ (section~\ref{sec: spin transformations}). The standard unconstrained optimization (Eq.~\eqref{eq: QP}) regularized with the face Laplacian $\brmL_f$ (or one of its powers) yields quasi-conformal transformations (Fig.~\ref{fig: leading figure}). Large steps $\tau\eq0.5$--$1$ typically remain stable. Whether $\phi$ is numerically integrable can be checked by monitoring the discrepancy between edges $\tilde{E}$ integrated as per Eq.~\eqref{eq: discrete spin transformation}, and edges recomputed from $\tilde{f}$ (\textit{after} getting $\tilde{f}$ from Eq.~\eqref{eq: edge integration}). The closedness generally holds within a few percent across several large steps without an explicit constraint, and within $10^{-6}$ with an explicit constraint (Appendix~\ref{sec: edge integration}). A trade-off between conformality and area distortion is achieved by weighing in a soft constraint on the square norm of the logarithmic area distortion (Fig.~\ref{fig: mean curvature vs. spin}).

\subsubsection{Comparison to Mean Curvature Flow.} The procedure is compared with an incompressible mean curvature flow. Incompressibility is enforced to make the flow \textit{less} prone to develop singularities, by adding a balloon energy $\langle h\rangle \vec{n}$, where $\langle h\rangle\!\eq\! \int_S hdA$ is the average mean curvature. Two metrics of interest, defined over the mesh surface, are the conformality error $Q$ and the logarithmic area distortion $\epsilon_s=\log{\tilde{A}/A}$ (after normalising to the same total area). The quality factor $Q$ measures how close-to-conformal a transformation is, as the ratio of the largest to smallest eigenvalues of the Jacobian of the mapping from $f$ to $\tilde{f}$. For a conformal deformation, $Q$ is identically $1$ throughout the mesh. However the area distortion $\epsilon_s$ may become significant. Fig.~\ref{fig: mean curvature vs. spin} exemplifies the general observation that the mean curvature flow realises a suboptimal trade-off between angle and area preservation. As expected, unconstrained discrete spin transformations are quasi-conformal. Unavoidable area distortion is introduced but mesh elements retain their original quality (right column, top and middle). To contrast, the mean curvature flow arbitrarily destroys the mesh quality, angle and area ratios in regions of high curvature. Area constrained discrete spin transformations implement a sensible compromise, whereby~(i) area distortion is lessened;~(ii) numerical stability is preserved;~(iii) the conformal error increases rather uniformly over the entire mesh, leading to a graceful, slower loss of mesh quality. For surface fairing to a sphere, averaged over a random subset of $100$ meshes in the dataset and taking the \textit{maximum} over the mesh surface, we get the following -- mean curvature flow: $Q\eq97\pm165$, $\epsilon_s\eq 4.1\pm 1.5$;~unconstrained spin transformation: $Q\eq 1.42\pm0.08$, $\epsilon_s\eq2.9\pm 0.3$;~area constrained: $Q\eq 1.7\pm0.2$, $\epsilon_s\eq0.85\pm 0.05$. For the area constrained spin transform, the maximum area discrepancy simply reflects a user-specified soft target.

\subsubsection{Mesh Extrusion.} The task is now to reconstruct (``extrude'') a shape of interest back from a reference sphere, given its mean curvature $h^{\star}$ and area $A^{\star}$ mapped onto the sphere surface. There is to our knowledge very little done in that direction, even in related work~\cite{crane2013robust,ye2018unified}. To emphasize, we only wish to recover the original mesh up to \textit{pose} and scale. Encoding scale presents little difficulty, and shape is invariant under changes of pose. To evaluate the reconstruction accuracy, we rigidly align and rescale the extruded shape to the original one, and compute the maximum distance from points on the extruded mesh to the original surface. The strategy for extrusion closely mirrors that of mesh fairing, whereby we get $\bar{h}_i$ from $\delta h_i\!\coloneqq\! h_i^{\star}\!-\!h_i$, and set $\rho_i\!\coloneqq\! \bar{h}_i\sqrt{A_i^{\star}/A_i}$. As a preliminary comment, note that the degree of challenge regarding mesh extrusion critically depends on the exact experimental setting and goal, as contrasted in the two following settings. The first experiment aims to estimate the accuracy that can be reached in the best scenario (somewhat upper bounded by the registration error). We take a collection of $300$ subcortical meshes from the UK Biobank (incl. brain stems, caudate, putamen, accumbens, amygdala, hippocampus, thalamus, palladium) and flow them onto the unit sphere. We do not perform remeshing, only interpolating relevant maps to nodes and back to faces. We then directly reconstruct the mesh as described above. On average over the dataset, the maximum point-to-surface error is of $0.4$mm. The distribution of error is widely spread over different structures, the most challenging being caudates ($1.4$) and hippocampi ($1.2$); and the least ones being the accumbens, amygdala, palladium and thalamus ($\sim0.01$--$0.02$). This matches our expectations, given that caudates and hippocampi are in fact highly non spherical. Thus very significant area or angle distortion is introduced when mapping onto the sphere. The second experiment investigates a more challenging setup, whereby the flowed surface is remapped onto a reference sphere with uniform meshing. Shape-specific vertex density as well as face aspect ratio, which reflect the area and angle distortion introduced during the fairing, are thus discarded. We experiment with a set of $100$ brain stems (Fig.~\ref{fig: brain stem reconstruction}), which represent a happy medium between the most challenging and trivial structures, with a maximum reconstruction error of $1.4\pm0.3$mm ($2$--$4$\%). For the most challenging structures, various strategies to guide the reconstruction using either additional information obtained during the flow, or multiscale approaches with hierarchical encoding could be considered. This is left to explore in future work.

\begin{figure}[t]
\includegraphics[width=\textwidth]{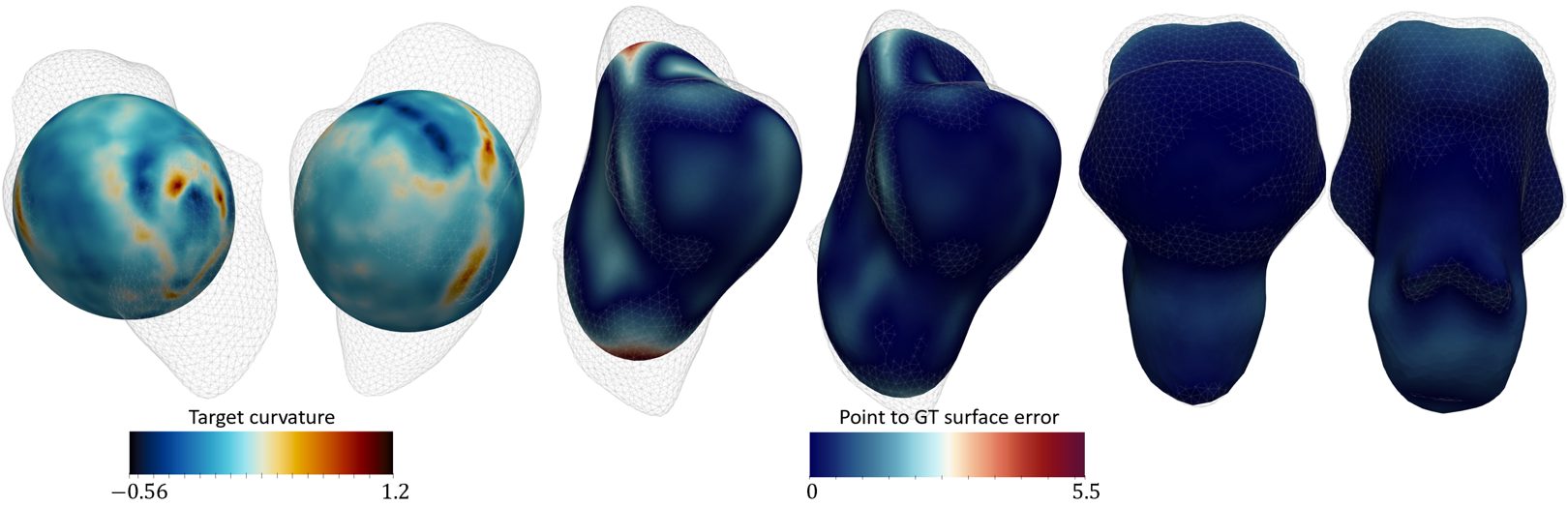}
\caption{Example extrusion of a brain stem from the reference sphere. The original shape is overlaid as a wireframe. ($1$st and $2$nd) Close to the initial stage. The target mean curvature map is displayed, rather than the reconstruction error. Note that the shape flow in the next stages intuitively matches the information captured in these maps. ($3$rd and $4$th) Intermediate stages in the flow, with overlayed reconstruction error. ($5$th to last) Reconstructed mesh from two views.} \label{fig: brain stem reconstruction}
\end{figure}

\section{Conclusion}

We have presented a method to manipulate surface meshes across very large deformations by prescribing mean curvature (half-density). The framework is
well suited for mesh fairing and extrusion, \eg to map shapes to, or back from a unit sphere. As a perspective, we believe the approach to have potential for
pose-invariant shape analysis, specifically for \textit{generative} modeling. Indeed mean curvature together with the metric generally is in one-to-one correspondence with the (closed) shape; this is in particular true for a spherical topology. We have shown how spin transformations computationally implement this insight. Therefore the shape \textit{geometry} could be losslessly encoded as a scalar \textit{function} on a template, making the modeling task more amenable to \textit{learning}. In the smooth setting, spin transformations are a subgroup of conformal maps. This partly explains their numerical stability across large flow steps, a property inherited in the discrete setting. However, conformal maps can introduce significant area distortion, \eg when flowing highly curved objects. An advantage of \textit{discrete} spin transformations is to relax exact conformality, and allow the user to trade off angle for area preservation. 

\section*{Acknowledgments}\label{sec:Acknowledgments}
This work is supported by the EPSRC (grant ref no.
EP/P023509/1) and the European Research Council (ERC) under the European Union's Horizon 2020 research and innovation programme (grant agreement No 757173, project MIRA, ERC-2017-STG). DC is also supported by CAPES, Ministry of Education, Brazil (BEX 1500/15-05). KK is supported by the President's PhD Scholarship of Imperial College London. IW is supported by the Natural Environment Research Council (NERC).

%
%
%
 \bibliographystyle{splncs04}
 \bibliography{paper}

\newpage

\appendix
\section{Geometric interpretation of hyperedges}
\label{sec: hyperedges}

\begin{remark}
Letting $u_{ij}\!\triangleq\!e_{ij}/\vert e_{ij}\vert$ and after straightforward manipulations, we get: 
\begin{equation}
E_{ij}=\frac{\vert e_{ij}\vert}{\cos{(\theta_{ij}/2)}}\, \exp\left(\frac{\pi-\theta_{ij}}{2}u_{ij}\right)\, .
\end{equation} Thus conjugation by $E_{ij}^{-1}$ sends any vector lying on face $i$ to face $j$, and $-n_i$ to $n_j$. Intuitively speaking, $E_{ij}$ carries information about a connection structure between the affine spaces of faces $j$ and $i$~\cite{hoffmann2018discrete}.
\end{remark}

\begin{remark}
$\bar{E}_{ij}=H_{ij}-e_{ij}=H_{ji}+e_{ji}=E_{ji}$. Moreover, assuming the mesh to be closed, edges sum to $0$ over any given face, so that $\sum_j E_{ij} = H_i\in\R$.
\end{remark}

\begin{remark} 
Definition of hyperedges as per Eq.~\eqref{eq: hyperedges} may seem somewhat arbitrary. In fact, $E_{ij}$ is necessarily of the form $\vert e_{ij}\vert \tan{\alpha_{ij}} + e_{ij}$, up to a multiplicative constant, under the following mild conditions: (a) the imaginary part of $E_{ij}$ is along $e_{ij}$; (b) $\bar{E}_{ij}=E_{ji}$; and (c) $\sum_j \!E_{ij}\!\in\!\R$ iff face $i$ closes. Relating $\alpha_{ij}$ to the bending angle $\theta_{ij}$ is sufficient to guarantee that spin transformations transform a net into another valid net (\ie the real part $\tilde{H}_{ij}$ of the transformed edge $\tilde{E}_{ij}$ is provably consistent with the constructive definition above).
\end{remark}

\section{Edge Integration}
\label{sec: edge integration}

Let $\nabla\tilde{f}\colon \epsilon\!\triangleq\!(v\!\rightarrow\! v') \mapsto (\tilde{f}_{v'}\!-\!\tilde{f}_{v})\!\triangleq\! \nabla\tilde{f}(\epsilon)$ the discrete gradient. We are looking for $\tilde{f}$ s.t. $\nabla\tilde{f}(\epsilon)=\tilde{e}_{\epsilon}$. When such an $\tilde{f}$ exists, $\tilde{e}$ is said to be \textit{exact} (as a discrete $1$-form, defined over edges $\epsilon$). In that case, Eq.~\eqref{eq: edge integration} follows by taking $\nabla\cdot$ on both sides.

Define the discrete curl operator $[\nabla\!\times\! \tilde{e}](i)\!\triangleq\! \sum_{j\in\calN(i)}\tilde{e}_{ij}$, indexing as in section~\ref{sec: geometric setting}. The curl sends a 1-form (over edges) to a 2-form (over faces). If $[\nabla\!\times\! \tilde{e}]$ vanishes everywhere, $\tilde{e}$ is said to be \textit{closed}. It is easy to verify that $\nabla\!\times\!\nabla\tilde{f}$ is everywhere zero, so that \textit{exactness} always implies \textit{closedness}. For a (discrete) spherical topology, the converse holds: closedness implies exactness. Now let $\tilde{e}\eq\mathrm{Im}\, \tilde{E}$ be generated by a spin transformation $\phi$ acting on hyperedges $E$ with $e$ closed. Then $\tilde{e}$ is closed iff Eq.~\eqref{eq: closedness} is satisfied (immediate from the definition of $D_\calX$, cf. section~\ref{sec: spin transformations}).

\begin{remark}
High-level elements of constructive proof are derived from the mesh being simply connected. It is path connected, so we can fix a vertex $v$ and reach any vertex $v'$ from $v$ by following a path $\gamma({v\!\rightarrow\! v'})$ on edges. Let $\tilde{f}_{v'}\!\coloneqq\!\tilde{f}_v+\sum_{\epsilon\in\gamma(v\!\rightarrow\! v')}\tilde{e}_{\epsilon}$ obtained by summing edges along the path. $\tilde{f}$ is well defined because the value at $v'$ is independent of the path. Indeed let $\gamma_1$, $\gamma_2$ two paths from $v$ to $v'$. Following $\gamma_1$ then the reverse of $\gamma_2$, we run a closed loop. Self-intersections are removed without loss of generality. One can prove by induction on the loop length that edges sum to $0$ over the loop, thus the sum over $\gamma_1$ and $\gamma_2$ are equal. This holds for vertices on a single face by \textit{closedness}. Closed loops of arbitrary length can always be incrementally shrunk down to this case (by \textit{simple} connectivity), without changing the sum of edge values (by \textit{closedness}). 
\end{remark}

\subsubsection{Non-simply connected topologies.} Consider a path connected mesh, but possibly with handles (note that closed loops circling a handle cannot be continuously shrunk down to a trivial loop). A \textit{closed} 1-form $\tilde{e}$ can fail to be \textit{exact} if it has a non-zero \textit{harmonic} component, \ie if it can be written as $\tilde{e}\eq \omega+\nabla\tilde{f}$ for some discrete 0-form $\tilde{f}\colon V\!\rightarrow\!\mathrm{Im}\, \bbH$ and harmonic 1-form $\omega\colon E\!\rightarrow\! \mathrm{Im}\, \bbH$ (s.t. $\omega\!\neq\! 0$ is closed and $\Updelta_1 \omega\eq 0$). Equivalently $\omega$ is closed with vanishing divergence $\nabla\cdot \omega\eq 0$. While Eq.~\eqref{eq: edge integration} still admits a solution $\tilde{f}$, the corresponding edges, $\tilde{e}\!-\!\omega$, are not the prescribed ones. Fortunately, convenient exactness constraints can be derived via the following theorem.

\begin{theorem}[Helmholtz--Hodge decomposition]
The $L^2$ space of (alternating) 1-forms (edge flows) $L_{\wedge}^2(E)$ on $\calG\!\triangleq\!(\calV,\calF,\calE)$ admits an orthogonal decomposition into subspaces of co-exact, harmonic, and exact forms:
\begin{equation}
L_{\wedge}^2(E)=\mathrm{im}([\nabla\!\times]^\T) \oplus \underbrace{\mathrm{ker}(\Updelta_1) \oplus\mathrm{im}(\nabla)}_{\mathrm{ker}(\nabla\!\times)}\, ,
\end{equation}
where $\Updelta_1\!\triangleq\! -\nabla[\nabla\cdot] + [\nabla\!\times]^\T[\nabla\!\times]$ is the so-called \textit{graph Helmholtzian}. 
\end{theorem}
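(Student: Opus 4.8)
The plan is to recognize the statement as the finite-dimensional (combinatorial) Hodge decomposition attached to the discrete de~Rham complex
\[
L^2_\wedge(\calV) \xrightarrow{\;\nabla\;} L^2_\wedge(\calE) \xrightarrow{\;\nabla\times\;} L^2_\wedge(\calF),
\]
and to prove it by pure linear algebra, exploiting that all three spaces are finite-dimensional, so that no completeness or elliptic-regularity argument (as in the smooth setting) is required. I write $d_0 \triangleq \nabla$ and $d_1 \triangleq \nabla\times$ for the two coboundary maps and $d_0^{*}, d_1^{*}$ for their adjoints with respect to the chosen inner products. The only two structural inputs I need are (i)~the cochain identity $d_1 d_0 = [\nabla\times]\nabla = 0$, already noted above (``exactness implies closedness''), and (ii)~the identification of the adjoints, namely $d_0^{*} = -\nabla\cdot$ and $d_1^{*} = [\nabla\times]^{\T}$, so that the graph Helmholtzian reads $\Updelta_1 = d_0 d_0^{*} + d_1^{*} d_1$.

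First I would record the generic orthogonal splitting for a linear map $T$ between finite-dimensional inner product spaces, $W = \mathrm{im}(T) \oplus \ker(T^{*})$, and apply it twice: to $d_0$, giving $L^2_\wedge(\calE) = \mathrm{im}(\nabla) \oplus \ker(\nabla\cdot)$, and to $d_1$, giving $L^2_\wedge(\calE) = \ker(\nabla\times) \oplus \mathrm{im}([\nabla\times]^{\T})$. Next I would show the two images are mutually orthogonal: for $a \in L^2_\wedge(\calV)$ and $b \in L^2_\wedge(\calF)$,
\[
\langle \nabla a,\; [\nabla\times]^{\T} b\rangle = \langle [\nabla\times]\nabla a,\; b\rangle = 0
\]
by the cochain identity (i). Since $\mathrm{im}(\nabla)^{\perp}=\ker(\nabla\cdot)$ and $\mathrm{im}([\nabla\times]^{\T})^{\perp}=\ker(\nabla\times)$, the orthogonal complement of the subspace spanned by the two images is exactly $\ker(\nabla\cdot)\cap\ker(\nabla\times)$.

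The remaining step is to identify that complement with $\ker(\Updelta_1)$, via a positivity (energy) argument: for any $\omega$,
\[
\langle \Updelta_1 \omega,\, \omega\rangle = \|\nabla\cdot\,\omega\|^2 + \|\nabla\times\,\omega\|^2 \ge 0,
\]
so $\Updelta_1 \omega = 0$ forces both $\nabla\cdot\,\omega = 0$ and $\nabla\times\,\omega = 0$, while the reverse inclusion is immediate; hence $\ker(\Updelta_1) = \ker(\nabla\cdot)\cap\ker(\nabla\times)$. Substituting into the previous paragraph yields the three-fold orthogonal decomposition $L^2_\wedge(\calE) = \mathrm{im}([\nabla\times]^{\T}) \oplus \ker(\Updelta_1) \oplus \mathrm{im}(\nabla)$, and the under-braced grouping follows because the second splitting of $d_1$ already gives $\ker(\nabla\times) = \ker(\Updelta_1)\oplus\mathrm{im}(\nabla)$.

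I expect the decomposition itself to be routine once this scaffolding is in place; the genuinely content-bearing, and therefore trickiest, step is the combinatorial verification of fact (ii), namely that $-\nabla\cdot$ and $[\nabla\times]^{\T}$ are the true adjoints of $\nabla$ and $\nabla\times$ under the $L^2$ inner products on $\calV$, $\calE$ and $\calF$, including the orientation signs and any metric weights implicit in those products. Getting the signs and weightings consistent is precisely what makes the energy identity come out with the correct $+$ sign; everything downstream is formal finite-dimensional linear algebra.
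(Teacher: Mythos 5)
Correct, and essentially the paper's own approach: the paper's proof is a one-line appeal to the Hodge theorem in linear algebra resting on exactly your two structural inputs, $\nabla\!\times\!\nabla=0$ and $\nabla^\T=-[\nabla\cdot]$, and your argument simply writes out that standard finite-dimensional proof in full (generic splitting $\mathrm{im}(T)\oplus\ker(T^*)$ applied twice, mutual orthogonality of the images via the cochain identity, and the energy identity $\langle \Updelta_1\omega,\omega\rangle=\Vert\nabla\cdot\omega\Vert^2+\Vert\nabla\!\times\omega\Vert^2$ identifying the harmonic middle term). One minor nuance: the adjoint identification you single out as the trickiest step is immediate in the paper's setting, since its cotangent-discretization remark \emph{defines} $\nabla\cdot\triangleq-\nabla^\T$ with respect to the stated weighted inner products; the combinatorial content there is that this adjoint coincides with the cotangent-weighted divergence, which is not needed for the decomposition itself.
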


\begin{proof}
This is Hodge theorem in linear algebra, noting that $\nabla^\T=-[\nabla\cdot]$ and $\nabla\!\times\!\nabla\eq 0$. Moreover, by Hodge isomorphism theorem, the dimension of $\mathrm{ker}(\Updelta_1)$ is the $1$st Betty number $b_1$, \ie the number of handles for $\calG$ (generally small).
\end{proof} 

Concretely, let $\omega^1 \cdots \omega^{b_1}$ a set of null eigenvectors for the Helmholtzian $\Updelta_1$. Edges $\tilde{e}$ are integrable iff $\tilde{e}$ is closed (Eq.~\ref{eq: closedness}) and orthogonal to $\omega^k \nu$ ($k\eq 1\cdots b_1$, $\nu\eq \bbi,\bbj,\bbk$) w.r.t. the inner product on $L_{\wedge}^2(E)$: $\langle \tilde{e}\vert \omega^k \nu\rangle_{1,\bbH}\eq 0$. In Appendix~\ref{sec: integrability constraints}, these constraints on transformed edges are turned into constraints on the spin transformation $\phi$, or alternatively on the prescribed curvature map $\rho$.

\begin{remark}[Cotangent discretization]
\label{rmk: cotangent discretization}
Let $\calX$ a triangulated net. For $\R$-edge flows $g,\tilde{g}\in L_{\wedge}^2(E)$, define the inner product $\langle g\vert \tilde{g} \rangle_{1,\R}\!\triangleq\!\frac{1}{2}\sum_\epsilon w_\epsilon g(\epsilon)\tilde{g}(\epsilon)$. Set edge weights $w_\epsilon$ to the symmetric expression $w_{v\rightarrow v'}\!\triangleq\! \frac{1}{2}(\cot{\angle vv_1v'}+\cot{\angle v'v_2v})$ where $v_1$ (resp. $v_2$) complete the two triangular faces adjacent to the edge $v\!\rightarrow\! v'$. On $0$-form, define the standard inner product $\langle f \vert \tilde{f}\rangle_0\!\triangleq\! \sum_v f(v)\tilde{f}(v)/A_v$ where vertices are weighted by cell areas $A_v$. Define the divergence operator $\nabla\cdot\!\triangleq\! -\nabla^\T$ as the negative adjoint of the gradient $\nabla$. Then $\nabla\cdot$ is exactly the \textit{cotangent}-weighted divergence (the sum of outbound edge flows at $v$ weighted by $w_{v\rightarrow v'}$) and $\Updelta=\nabla\cdot \nabla$ the \textit{cotangent} Laplacian. With this, the relevant inner product $\langle\cdot\vert\cdot\rangle_1$ compatible with the \textit{cotangent} scheme is now specified.
\end{remark}

\section{Exactness Constraint}
\label{sec: integrability constraints}

From Appendix~\ref{sec: edge integration}, $E \!\rightarrow_\phi \!\tilde{E}$ is integrable if $\tilde{E}$ is closed and orthogonal to $\omega_k \nu$, where $\omega^k$ spans real-valued harmonic $1$-forms ($k\eq 1\cdots b_1$) and $\nu\eq \bbi,\bbj,\bbk$, w.r.t. the inner product on $L_{\wedge}^2(E)$, say $\langle\cdot\vert\cdot\rangle_{w_\epsilon}$ with the notations of Remark~\ref{rmk: cotangent discretization}.

\subsubsection{Closedness.} This is Eq.~\eqref{eq: closedness} and already core to the framework. Algorithmically, Eq.~\eqref{eq: QP} only guarantees the closedness to approximately hold, but we have observed very good agreement in practice without further constraint. If necessary, closedness can be strictly enforced as a set of $3\vert \calF\vert$ real-valued constraints ($3$ imaginary dimensions, $\vert \calF\vert$ faces), \eg by linearizing $\mathrm{Im}(\bar{\phi}_i D\phi_i)\eq 0$ around the current solution. 

\subsubsection{Exactness.} Exactness is guaranteed if $\mathrm{Re}(\nu \sum_\epsilon w_\epsilon\tilde{E}_\epsilon \omega_\epsilon^k)\eq 0$. $\omega_\epsilon$ and $\mathrm{Im}(\tilde{E})$ are alternating (\eg $w_{ij}=-w_{ji}$ with the conventions of section~\ref{sec: geometric setting}) so this rewrites as a set of $b_1$ ($3 b_1$ real-valued) constraints:
\begin{equation}
\label{eq: exactness quadratic constraint}
\mathrm{Im}~\sum_i \bar{\phi}_i \Big(\sum_{j\sim i} w_{ij} E_{ij}\omega_{ij}^k\phi_j\Big)=0\, ,
\end{equation}
that can be linearized around the current solution $\phi$. Alternatively, let us derive the corresponding constraint on $\rho$.  
Consider a time flow $\phi_t$, $\rho_t$ starting at $\phi_0=1$, with time derivative $\dot{\phi}$, $\dot{\rho}$ at $t\eq 0$. Deriving w.r.t. time, Eq.~\eqref{eq: closedness} becomes $D_e\dot{\phi}=\dot{\rho}$ and Eq.~\eqref{eq: exactness quadratic constraint} rewrites as:
\begin{equation}
2\sum_i \dot{\bar{\phi}}_i \Big(\sum_{j\sim i}w_{ij}E_{ij} \omega_{ij}^k\Big)= 2\sum_i A_i \dot{\bar{\phi}}_i v_i^k 
 \in \R\, ,
\end{equation}
where we use the alternating property to collapse the two sums and define $v^k\!\triangleq\! \nabla\!\times\!(wE\omega^k)$. In other words, for $\nu$ spanning $\mathrm{Im}\,\bbH$, $\big\langle \dot{{\phi}} \big\vert v^k\nu \big\rangle_{0,\bbH}\eq 0$. Let $z^k$ the unique solution to $D_ez^k\eq v^k$ and note that $D_e(z^k\nu)\eq (D_ez^k)\nu$. Finally, since $D_e$ is self adjoint, $\big\langle \dot{\phi}\big\vert D_ez^k\nu \big\rangle_{0,\bbH} = \big\langle D_e\dot{\phi}\big\vert z^k\nu \big\rangle_{0,\bbH}$ and we obtain:
\begin{equation}
\label{eq: exactness rho constraint}
\big\langle \delta\!\rho \big\vert z_{\nu}^k \big\rangle_{0,\R} =0, \quad k=1\cdots b_1, \quad\nu=\bbi,\bbj,\bbk\, .
\end{equation}
where $z_\nu^k$ are the three imaginary components of $z^k$. The constraint can be enforced by projection of the update $\delta\!\rho$ on the orthogonal subspace of the $z_\nu^k$. 

\subsubsection{To summarize:}
\begin{enumerate}[label=(\roman*)]
\item Compute the null eigenspace $\omega_1\cdots \omega_{b_1}$ of the Helmholtzian $\Updelta_1\!\triangleq\! -\nabla[\nabla\cdot] + [\nabla\!\times]^\T[\nabla\!\times]$
\item Compute $v^k\!\triangleq\! \nabla\!\times\!(wE\omega^k)$ and $z^k$ s.t. $D_ez^k\eq v^k$
\item Project $\delta\!\rho$ onto the orthogonal subspace of the imaginary components of $z^k$
\end{enumerate}

\section{Area distortion}
\label{sec: area distortion penalty}

\subsubsection{Overview.} We wish to penalize local scale changes $\log{A_i/A_i^0}$ (\ie $A_i$ moving away from the initial area distribution $A_i^0$), relative to the global rescaling $\sum_{i} A_i/\sum_{i} A_i^0$. Therefore the local scale change (with global rescaling factored out) writes as 
\begin{equation}
\label{eq: local scale change}
s_i \triangleq \log{A_i/A_i^0}-\log{\langle A_i\rangle / \langle A_i^0\rangle}, \,
\end{equation}
with $\langle A_i\rangle$ the average face area. We implement a soft constraint of the type $s_i^2\leq \epsilon^2$, where $\epsilon$ defines a tolerance for area distortion. $\epsilon$ can be set by the user or jointly adjusted over the course of the iterations. After introducing Lagrange multipliers $\lambda_i$, we are looking at penalties of the form $\sum_i \lambda_i s_i^2/2$, which we approximate by linearizing $s_i$ w.r.t. a variation $\delta\phi$ of the spin transformation $\phi$. We have found this mechanism to hold over large integration steps in practice. This can be better intuited by looking at the nature of the approximations made during the linearization (see below). The approximate quadratic energy is the sum of a sparse block diagonal matrix and a low-rank (dense) term. Woodbury matrix identities allow to solve quadratic systems involving this energy without directly storing or manipulating the dense matrix.

\subsubsection{Linearization of $s_i$.} We look for a linearized approximation of $\tilde{s}_i$ for a change $\delta \phi$ around the spin transformation $\phi$. We start by linearizing Eq.~\eqref{eq: discrete spin transformation}. Noting that $\phi_i+\delta \phi_i = \phi_i(1+\phi_i^{-1}\delta\phi_i)$, we get:
\begin{align}
\tilde{E}_{ij} & = \overline{(1+\phi_i^{-1}\delta\phi_i)} \cdot \overline{\phi}_i {E}_{ij}^0 {\phi}_j \cdot (1+\phi_j^{-1}\delta\phi_j)\\
	& = \overline{(1+\phi_i^{-1}\delta\phi_i)} {E}_{ij} (1+\phi_j^{-1}\delta\phi_j) 
\end{align}

Recalling from Appendix~\ref{sec: hyperedges} that $\vert E_{ij}\vert \cos{(\theta_{ij}/2)} = \vert e_{ij}\vert$ and taking the norm on both sides, we get:
\begin{equation}
\vert \tilde{e}_{ij}\vert \cos{(\theta_{ij}/2)} = \vert 1+\phi_i^{-1}\delta\phi_i \vert \cdot \vert {e}_{ij}\vert\cos{(\tilde{\theta}_{ij}/2)} \cdot \vert 1+\phi_j^{-1}\delta\phi_j\vert \, .
\end{equation}

We can ignore the change in the cosinus of the dihedral angle (to the first order) for simplicity. Secondly we assume $1+\phi^{-1}\delta\phi$ to be close to conformal. This assumption is coherent with the spirit of the framework, and should hold regardless if mesh quality is to be preserved locally in time. The overall transformation is still expected to progressively drift from quasi-conformality to accommodate area preservation. With this we can approximate the change in area for a small variation $\delta\phi$ from the change in edge length, and we get:
\begin{equation}
\tilde{A}_i\approx A_i \vert 1+ \phi_i^{-1}\delta\phi_i \vert^4 \, ,
\end{equation}
where $A_i$ is the area when applying $\phi$ to the initial face-edge constraint net, resp. $\tilde{A}_i$ when applying $\phi+\delta\phi$. This yields the following expression for $\tilde{s}_i$:

\begin{align}
\tilde{s}_i  & = s_i + \log{\vert 1+ \phi_i^{-1}\delta\phi_i \vert^4}-\log{\langle \vert 1+ \phi^{-1}\delta\phi \vert^4\rangle_A}\\
& =  s_i + \log{\frac{\vert \phi_i+ \delta\phi_i \vert^4}{\vert \phi_i \vert^4}}-\log{\left\langle \frac{\vert \phi+ \delta\phi \vert^4}{\vert \phi \vert^4}\right\rangle_A}\\
& \approx s_i + 4 \left(\frac{\langle \phi_i \vert \delta\phi_i \rangle_\bbH}{\vert \phi_i \vert^2} - \left\langle\frac{\langle \phi \vert \delta\phi \rangle_\bbH}{\vert \phi \vert^2} \right\rangle_A \right)
\end{align}
where $\langle \cdot \rangle_A$ denotes the spatial average weighted by the face areas $A$, whereas $\langle\cdot \vert \cdot\rangle_\bbH$ is the inner product on quaternions. In the last expression, we made use of $\vert \delta\phi_i\vert \ll \vert\phi_i\vert$, keeping only first order terms.

\subsubsection{Penalty matrix assembly.} Writing the penalty as $\frac{1}{2}\upphi^\T \brmQ \upphi - \rmF^\T \upphi$, the penalty matrix $\brmQ$ is the sum of a sparse block diagonal term and $3$ rank-1 terms, $4^2 (\text{diag}(Q_iQ_i^\T) + \rmL_1 \rmL_1^\T - \rmL_2\rmL_3^\T - \rmL_3 \rmL_2^\T)$. The derivations are tedious but straightforward, yielding:

\begin{equation}
\rmF_i = 4 \Big(\langle s\lambda\rangle_A \!-\! s_i \lambda_i\Big) \left\vert \frac{a_i}{\vert \phi_i\vert^2} \upphi_i \right \rangle_\bbH\, , \quad Q_i  = \left\vert \frac{\sqrt{a_i \lambda_i}}{\vert \phi_i\vert^2} \phi_i \right \rangle_\bbH \, , 
\end{equation}
\begin{equation}
\rmL_{j1} = \left\vert \frac{a_j\sqrt{\langle\lambda\rangle_A}}{\vert \phi_j\vert^2} \phi_j \right \rangle_\bbH \, , \quad
\rmL_{j2}  = \left\vert \frac{a_j\lambda_j}{\vert \phi_j\vert^2} \phi_j \right \rangle_\bbH \, , \quad
\rmL_{j3} = \left\vert \frac{a_j}{\vert \phi_j\vert^2} \phi_j \right \rangle_\bbH \, .
\end{equation}
with the use of bra-ket notation, and where $a$ stands for a normalised area $a\!\triangleq\! A/A_{\text{tot}}$. The sum of rank-$1$ updates might be degenerate (for instance it is rank-$1$ if all multipliers are equal). SVD decomposition can be used to derive an equivalent, non degenerate low-rank basis of vectors.

\end{document}